\theoremstyle{plain}
\newtheorem{thm}{Theorem}
\newtheorem{lem}[thm]{Lemma}
\theoremstyle{remark}
\newtheorem{rem}[thm]{Remark}
\numberwithin{equation}{section}
\newcommand{\R}{\mathbb R}             
\newcommand{\C}{\mathbb C}             
\DeclareMathOperator{\im}{im}
\newcommand{\mshf}[1]{\protect{\frac{1}{2\omega_\mathbf{#1}}}}
\newcommand{\mshi}[1]{\protect{\textup{d}^3\mathbf{#1}}}
\newcommand*{\omsh}[1]{\omega_\mathbf{#1}}
\author{Dorothea Bahns}
\address{Courant Research Centre ``Higher Order Structures in Mathematics'', Mathematisches Institut, Universit\"at G\"ottingen, Bunsenstr. 3-5, D - 37073 G\"ottingen, Germany}
\email{bahns@uni-math.gwdg.de}
\title[Schwinger functions in noncommutative quantum field theory]{Schwinger functions in\\ noncommutative quantum field theory}
\begin{document}
\frontmatter
\begin{abstract}
It is shown that the $n$-point functions of scalar massive free fields on the noncommutative Minkowski space are distributions which are boundary values of analytic functions. Contrary to what one might expect, this construction does not provide a connection to the popular traditional Euclidean approach to noncommutative field theory (unless the time variable is assumed to commute). Instead, one finds Schwinger functions with twistings  involving only momenta that are on the mass-shell. This  explains why renormalization in the traditional Euclidean noncommutative framework crudely differs from renormalization in the Minkowskian regime.
\end{abstract}
\subjclass{81T75,46F20}
\maketitle
\mainmatter
\section{Introduction}
\label{sec:Intro}
A quantum field theoretic model is to a large part determined by 
the choice of a partial differential operator. For physical reasons, this operator has to be  \emph{hyperbolic}, and one of its fundamental solutions, the so-called Feynman propagator, is the building block in any perturbative calculation of  physically relevant quantities. Nonetheless, 
ever since proposed by Symanzik in 1966~\cite{sym} based on ideas of Schwinger, the so-called Euclidean framework has played a very important role. In this framework, the building block is the so-called Schwinger function, a fundamental solution of an \emph{elliptic} partial differential operator. The Euclidean formalism not only simplifies calculations, but seems to be indispensable in constructive quantum field theory. The remarkable theorem of Osterwalder and Schrader gives sufficient conditions for the possibility to recover the original
hyperbolic (physically meaningful) field theory from a Euclidean framework, and therefore justifies the Euclidean framework in ordinary quantum field theory. It is recalled below how the Schwinger function of the Euclidean framework of free scalar field theory is derived by analytic continuation from 
the hyperbolic theory and how it relates to the Feynman propagator. 

To incorporate gravitational aspects into quantum field theory, one possibility is to study quantum fields on noncommutative spaces, the most popular of which is the noncommutative ``Moyal space'' whose coordinates are subject to commutation relations of the Heisenberg type~\cite{dfr}. Already in that early paper, a possible setting for (unitary) hyperbolic perturbative quantum field theory was proposed, where the field algebra is endowed with a noncommutative product, the twisted (convolution) product. Notwithstanding, the vast majority of publications on field theory on noncommutative spaces (``noncommutative field theory'') has been and still is formulated within a Euclidean setting. This setting was not derived from a hyperbolic noncommutative theory but directly from the Euclidean framework of ordinary  field theory by replacing all products with twisted ones. I shall refer to this approach as the traditional noncommutative Euclidean framework. Despite some attempts, it has not been possible to relate this traditional noncommutative Euclidean setting to some hyperbolic noncommutative theory -- in fact, there is evidence that it might be impossible to do so, unless the time variable commutes with all space variables. 
It became clear after some years that within the traditional Euclidean noncommutative framework, already the models built from the most harmless of fields, namely the scalar massive fields, have very peculiar properties.
%
%
Most notably, the so-called ultraviolet--infrared mixing problem noted in~\cite{uvir} severly limits the type of model that can be defined at all~\cite{GrW,R}.

In contrast to these results, I have shown~\cite{Bwien} that in a hyperbolic setting, the ultraviolet--infrared mixing effect is not present at least in the most prominent example graph that exibits ultraviolet--infrared mixing in the traditional Euclidean realm. This result will be presented in a longer and more technical article shortly. A general proof of the conjecture that the ultraviolet-infrared mixing problem may be absent in this hyperbolic noncommutative setting is, however, still missing,
as the calculations and the combinatorial aspects of hyperbolic noncommutative field theory are quite involved. It is therefore desirable to find a Euclidean framework that can actually be derived from a hyperbolic noncommutative setting in the hope that -- as in ordinary quantum field theory -- such a Euclidean setting might simplify  the combinatorial aspects of perturbation theory and that the full Euclidean machinery of renormalization might be employed. In such a setting, it might be feasable to investigate a theory's renormalizability and the possible absence of the ultraviolet-infrared mixing problem in general.

As a very first step towards this goal, I will show in this note that one can indeed derive a noncommutative Euclidean framework from a hyperbolic theory of free fields on the Moyal space, and that this framework is \emph{not} the traditional one that is investigated in the literature. In contrast to this traditional framework, the new Euclidean framework can moreover be related to a setting involving Feynman propagators via an analytic continuation similar to the one of ordinary quantum field theory. The note is organized as follows: In the next section it is recalled how the Schwinger function is derived in ordinary massive scalar quantum field theory and how it is related to the Feynman propagator. In the third section, a Euclidean 4-point function (Schwinger function) is derived from a noncommutative hyperbolic Wightman function of 4 free massive scalar fields and the prescription how Schwinger functions of arbitrarily high order are calculated is given. It is shown that the Euclidean framework thus derived differs from the traditional noncommutative Euclidean approach. Moreover, the relation to Feynman propagators is clarified. In an outlook I will briefly comment on further possible research that ensues from these new results.

\section{Euclidean methods in quantum field theory}

The hyperbolic partial differential operator of  massive scalar field theory is the massive Klein--Gordon operator $P:=\frac{\partial^2}{\partial {x_0}^2}-\Delta_\mathbf{x} +m^2$ on $\R^4$ where $\Delta_\mathbf{x}$ denotes the Laplace operator on $\R^3$, $\mathbf{x} \in \R^3$, and $m>0$ is a real parameter, called the field's mass. As mentioned in the introduction, all the relevant quantities of a scalar field theoretic model can be calculated from a fundamental solution of this operator. Recall here that 
a distribution $E \in  \mathcal D^\prime (\R^n)$ is a fundamental solution (or Green's function) of a partial linear differential operator $P(\partial)$ on $\R^n$ provided that in the sense of distributions, $P(\partial) E = - \delta$ with $\delta$ denoting the $\delta$-distribution.

Our starting point here, however, is the 2-point-function $\Delta_+\in \mathcal S^\prime (\R^4)$, a tempered distribution which is a solution (not a fundamental solution)
of the Klein--Gordon equation, $P\Delta_+=0$ in the sense of distributions. For $x=(x_0,\mathbf{x}) \in \R^4$, $x_0\in \R$, $\mathbf{x} \in \R^3$, it is given explicitly by 
\[
\Delta_+ (x) = \frac 1{(2\pi)^3} 
\int \,\mshf k \, \textup{e}^{-\textup{i} \omsh k x_0 + \textup{i} \mathbf{k}\mathbf{x}} \,  \mshi k \ , \qquad \mbox{where }\omsh k= \sqrt{\mathbf{k}^2+m^2},
\]
an expression which in fact makes sense as an oscillatory integral, see \cite[Sec~IX.10]{RS} for details. Here and in what follows, boldface letters denote elements of~$\R^3$ and an expression such as $\mathbf{k}\mathbf{x}$ is shorthand for the canonical scalar product of $\mathbf{k}$ and $\mathbf{x}$. 

It is well-known that $\Delta_+$ is the boundary value (in the sense of distributions) of an analytic function. To see this, let us first fix some notation.
Let $a \in \R^n$ with $|a|=1$, let $\theta \in (0,\pi/2)$, and let $ay$ denote the canonical scalar product in $\R^n$. Then the cone  about $a$  with opening angle~$\theta$ is the set $\Gamma_{a,\theta}=\{ y \in \R^n \mid ya>|y|\cos \theta \} \subset \R^n$. Let $\Gamma^*_{a,\theta}$ denote the dual cone, $\Gamma^*_{a,\theta}:=\Gamma_{a,\frac \pi 2-\theta}$.
For temperered distributions whose support is contained in the closure of a cone, the following general assertion holds:

\begin{thm}[\cite{RS}, Thm IX.16]
  \label{thm:AnaCont} Let $u$ be a tempered distribution with support in the closure of a cone ${\Gamma_{a,\theta}}$, $a \in \R^n$, $0<\theta < \pi/2$. Then its Fourier transform $\tilde u$ is the boundary value \textup{(}in the sense of tempered distributions\textup{)} 
of a function~$f$ which is analytic in the tube $\{z \in \C^n | -\im z \in \Gamma^*_{a,\theta} \}=:\R^n - \textup{i} \Gamma^*_{a,\theta} \subset \C^n$.
\end{thm}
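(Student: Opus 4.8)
The plan is to exhibit $\tilde u$ explicitly as the $\mathcal S'$--boundary value of an analytic function obtained by pairing $u$ against the \emph{complexified} exponential $x\mapsto\textup{e}^{-\textup{i}xz}$, after a suitable cut--off. The geometric input is the duality between $\Gamma_{a,\theta}$ and $\Gamma^*_{a,\theta}$: if $x\in\overline{\Gamma_{a,\theta}}$ and $\eta\in\Gamma^*_{a,\theta}=\Gamma_{a,\pi/2-\theta}$, the angle between $x$ and $\eta$ is at most $\theta+(\pi/2-\theta)=\pi/2$, so $x\eta\ge 0$; more quantitatively, once $\eta$ is confined to a slightly smaller cone $\Gamma_{a,\pi/2-\theta_1}$ with $\theta<\theta_1<\pi/2$ and kept bounded away from $0$, one gets $x\eta\ge c_0|x|$ for all $x\in\overline{\Gamma_{a,\theta_1}}$, with a constant $c_0>0$ that is locally uniform in $\eta$ but degenerates as $\eta\to 0$.

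Accordingly, given a point $\xi_0-\textup{i}\eta_0$ of the tube, I would pick $\theta_1\in(\theta,\pi/2)$ with $\eta_0\in\Gamma^*_{a,\theta_1}$ and a cut--off $\chi\in C^\infty(\R^n)$ equal to $1$ on a neighbourhood of $\overline{\Gamma_{a,\theta}}$, supported in $\overline{\Gamma_{a,\theta_1}}$, and homogeneous of degree $0$ outside a ball; then $\chi$ and all its derivatives are bounded, so $\chi$ is a multiplier of $\mathcal S(\R^n)$, and since $u$ is tempered with $\operatorname{supp}u\subset\overline{\Gamma_{a,\theta}}$ one has $\langle u,\varphi\rangle=\langle u,\chi\varphi\rangle$ for all $\varphi\in\mathcal S$. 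For $z=\xi-\textup{i}\eta$ near $\xi_0-\textup{i}\eta_0$ set $g_z(x):=\chi(x)\,\textup{e}^{-\textup{i}xz}=\chi(x)\,\textup{e}^{-\textup{i}x\xi}\,\textup{e}^{-x\eta}$. The estimate above gives, on all of $\R^n$, $|\partial^\beta_x g_z(x)|\le C_\beta(1+|z|)^{|\beta|}\textup{e}^{-c_0|x|}$, so $g_z\in\mathcal S(\R^n)$, and $z\mapsto g_z$ is a holomorphic $\mathcal S$--valued map (for each $x$, $z\mapsto\textup{e}^{-\textup{i}xz}$ is entire, and the difference quotients converge in $\mathcal S$ by the same bounds). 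Hence
\[
f(z):=(2\pi)^{-n/2}\,\langle u,\,g_z\rangle
\]
(in the Fourier normalisation of \cite{RS}) is analytic in $z$; continuity of $u$ on $\mathcal S$ moreover yields $|f(\xi-\textup{i}\eta)|\le C(1+|\xi|+|\eta|)^{N}\,c_0(\eta)^{-N}$, the polynomial--type growth expected of a tempered boundary value. Since any two admissible cut--offs differ by a test function vanishing near $\operatorname{supp}u$, the value $f(z)$ does not depend on the choice of $\theta_1$ and $\chi$; as $\Gamma^*_{a,\theta}=\bigcup_{\theta<\theta_1<\pi/2}\Gamma^*_{a,\theta_1}$, this defines $f$ consistently and analytically on the whole tube $\R^n-\textup{i}\Gamma^*_{a,\theta}$.

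It remains to identify the boundary value. Fix $\varphi\in\mathcal S(\R^n)$ and a ray $\{t\eta_0:t>0\}\subset\Gamma^*_{a,\theta}$, and choose $\theta_1$ so that this ray lies in $\Gamma^*_{a,\theta_1}$. Computing $\langle f(\,\cdot\,-\textup{i}t\eta_0),\varphi\rangle$ and exchanging the $\xi$--integration with the pairing in $u$ gives $\langle f(\,\cdot\,-\textup{i}t\eta_0),\varphi\rangle=\langle u(x),\,\chi(x)\,\textup{e}^{-t\,x\eta_0}\,\tilde\varphi(x)\rangle$, where $\tilde\varphi$ is the Fourier transform of $\varphi$. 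On $\operatorname{supp}\chi$ the exponent $-t\,x\eta_0$ is $\le 0$, so $\chi\,\textup{e}^{-t(\,\cdot\,)\eta_0}\,\tilde\varphi\to\chi\tilde\varphi$ in $\mathcal S$ as $t\downarrow 0$ (the rapid decay of $\tilde\varphi$ absorbing the factor $\textup{e}^{-t\,x\eta_0}-1$, which is uniformly bounded on $\operatorname{supp}\chi$ and tends to $0$ locally uniformly); by continuity of $u$ the right--hand side then tends to $\langle u,\chi\tilde\varphi\rangle=\langle u,\tilde\varphi\rangle=\langle\tilde u,\varphi\rangle$, i.e.\ $f(\,\cdot\,-\textup{i}t\eta_0)\to\tilde u$ in $\mathcal S'$.

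The main obstacle is concentrated in this last step. The natural integrand $\textup{e}^{-\textup{i}xz}$ is \emph{not} a Schwartz function, so the cut--off $\chi$ cannot be dispensed with; and the exponential--decay constant $c_0(\eta)$ degenerates as $\eta\to 0$, so one must check that the passage to the boundary is nonetheless controlled — this is exactly where the temperedness of $u$ (its pairing being dominated by finitely many Schwartz seminorms) and the boundedness of $\textup{e}^{-x\eta}$ on $\operatorname{supp}\chi$ are used. A subsidiary technical point is the legitimacy of exchanging the pairing $\langle u,\cdot\rangle$ with the $\xi$--integration; this may be justified directly (a vector--valued Fubini theorem), or circumvented by first invoking the structure theorem to write $u$, near its support, as a finite sum of derivatives of continuous polynomially bounded functions supported in $\overline{\Gamma_{a,\theta_1}}$, for which both the analytic continuation and its boundary value follow from ordinary dominated convergence.
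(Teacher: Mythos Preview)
The paper does not prove this theorem at all: it is quoted verbatim from Reed--Simon \cite{RS}, Thm~IX.16, and used as a black box. So there is no ``paper's own proof'' to compare against; your argument is a self-contained substitute, and it follows essentially the standard route (cut off to make $x\mapsto\textup{e}^{-\textup{i}xz}$ Schwartz on $\operatorname{supp}u$, check holomorphy of the resulting $\mathcal S$-valued map, and pass to the boundary using temperedness).

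One small inconsistency in your construction: you ask for $\chi\in C^\infty(\R^n)$ which is \emph{equal to $1$ on a neighbourhood of $\overline{\Gamma_{a,\theta}}$} and at the same time \emph{supported in $\overline{\Gamma_{a,\theta_1}}$}. These two requirements are incompatible near the apex: $0\in\overline{\Gamma_{a,\theta}}$, so any open neighbourhood of $\overline{\Gamma_{a,\theta}}$ contains a ball about the origin, whereas $\overline{\Gamma_{a,\theta_1}}$ contains no such ball. The standard fix is to allow $\chi\equiv 1$ on a fixed ball $B_R(0)$ as well, and only require $\operatorname{supp}\chi\subset \overline{\Gamma_{a,\theta_1}}\cup B_{2R}(0)$ (this is consistent with your ``homogeneous of degree $0$ outside a ball''). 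Then on $\operatorname{supp}\chi$ one has $\textup{e}^{-x\eta}\le C\,\textup{e}^{-c_0|x|}$ rather than $\textup{e}^{-c_0|x|}$, and in the boundary-value step $\textup{e}^{-t\,x\eta_0}$ is still uniformly bounded on $\operatorname{supp}\chi$ for $t\in(0,1]$ (by $1$ on the conic part, by $\textup{e}^{2R|\eta_0|}$ on the ball). With this adjustment the rest of your argument goes through unchanged.
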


\medskip
Observe that for $\tilde u$ to be the boundary value of $f$ as above in the sense of tempered distributions 
means that for any $\eta \in \Gamma^*_{a,\theta}$ and for any testfunction $g \in \mathcal S(\R^4)$, we have 
for $t \in \R$ approaching 0 from above, 
\[
\int f(x-\textup{i}t\eta)\,g(x)\,\textup{d}x \; \rightarrow \;  \tilde u (g) 
\]
as tempered distributions.

The Fourier transform $\tilde \Delta_+$ of the 2-point function, 
\begin{equation}
\tilde \Delta_+ (p_0,\mathbf{p})= \frac 1 {2 \omsh p}\,\delta(p_0-\omsh p), 
\end{equation}
is a tempered distribution whose support  (the positive mass shell) is contained in the closure of the cone $\Gamma_+:=\Gamma_{(1,\mathbf{0}), \pi/4}$ (the forward light cone).
Applied to $u:=\tilde \Delta_+$, Theorem~\ref{thm:AnaCont} thus guarantees that $\tilde u=\Delta_+$ is the boundary value of a function $f$ which is analytic in $\R^4 - \textup{i} \Gamma_+$ (observe that $\Gamma_+^*=\Gamma_+$). Explicitly, for $x=(x_0,\mathbf{x}) \in \R^4$ and $\eta=(x_4,\mathbf{0}) \in \Gamma_+$ (hence $x_4>0$), we have in this case
\begin{equation}\label{eq:schwingerT}
f(x-\textup{i}\eta) = \frac 1{(2\pi)^3} \int \mshf k \,
\textup{e}^{+\textup{i}\mathbf{k}\mathbf{x} - \omsh k (x_4+\textup{i}x_0)}\, \mshi k.
\end{equation}
We now define a function $s$ via 
\[
s(\mathbf{x}, x_4 +\textup{i}x_0):=f(x-\textup{i}\eta)
\]
for $x=(x_0,\mathbf{x}) \in \R^4$ and $\eta=(x_4,\mathbf{0}) \in \Gamma_+$ as above. Making use of the identity 
\begin{equation}\label{identity}
\frac 1{2 \omsh k}\; \textup{e}^{-\omsh k x_4} = \frac 1 {2\pi} \int_{-\infty}^\infty  \frac{\textup{e}^{\textup{i}k_4 x_4}}{k^2 +m^2} \; \textup{d}k_4 \qquad \mbox{ for } x_4 >0
\end{equation}
where $k=(\mathbf{k},k_4) \in \R^4$, $k^2=\mathbf{k}^2+k_4^2$, and setting $x_0=0$ in (\ref{eq:schwingerT}), we then find that 
\begin{equation}\label{eq:schwinger}
s(x) = \frac 1 {(2\pi)^4} 
\int  \frac{\textup{e}^{\textup{i}kx}}{k^2+m^2} \;\textup{d}^4k 
\quad \mbox{where $x=(\mathbf{x},x_4) \in \R^4$, $x_4>0$}.
\end{equation}
One now extends the function $s$ to a distribution $S \in \mathcal S^\prime (\R^4)$, the so-called \emph{Schwinger function}, by dropping the restriction on $x_4$. So, the formal integral kernel of $S$ 
is given by the Fourier transform\label{extension}
\[
\frac 1 {(2\pi)^4} 
\int  \frac{\textup{e}^{\textup{i}kx}}{k^2+m^2} \;\textup{d}^4k 
\]
of the smooth function
\[
\tilde S(k)=\frac 1 {k^2+m^2}
\]
on $\R^4$.
By definition, when restricted to the upper half space $x_4>0$, $S(\mathbf x, x_4)$ is (pointwise) equal to the function $s$ given in (\ref{eq:schwinger}).
Observe also that $S$ is the unique fundamental solution of the \emph{elliptic} partial differential operator $\Delta - m^2$ with $\Delta$ the Laplace operator on $\R^4$.

As mentioned in the introduction, the building block in hyperbolic perturbation theory is the Feynman propagator~$\Delta_F$, a fundamental solution for the Klein--Gordon operator $P=\frac{\partial^2}{\partial {x_0}^2}-\Delta_\mathbf{x} +m^2$.
Without going into details, let me mention that, remarkably, the Fourier transform $\tilde S$ of the Schwinger function is the analytic continuation of the Fourier transform $\tilde \Delta_F$ of the Feynman propagator $\Delta_F$ (up to a sign). In fact, formally, \label{SDeltaF} for the kernel $w$ given by 
\[
w(\mathbf{p},p_4-\textup{i}p_0):=\tilde \Delta_F(p_0+\textup{i}p_4,\mathbf{p}) \ ,
\]
we have 
\[
\tilde S(\mathbf{p},p_4)=-w(\mathbf{p},p_4) \phantom{\int_b}
\] 
for the Schwinger function's Fourier transform $\tilde S$.

\section{Analytic continuation in the noncommutative case}
It would be beyond the scope of this note to explain the possible unitary perturbative setups for massive scalar fields on the noncommutative Moyal space with hyperbolic signature (see~\cite{bahnsDiss} for a comparison). 
Only two features of such noncommutative (hyperbolic) field theories matter here. The first is the fact that our starting point still is the Klein--Gordon operator and the 2-point-function $\Delta_+$ discussed in the previous section. 
The  second important feature -- and this feature is shared by the traditional noncommutative Euclidean formalism -- is the fact that one has to consider not only products but also twisted products of distributions. 

To fix the notation, we note here that for two Schwartz functions $f,g  \in \mathcal S(\R^4)$ this twisted product (Moyal product) is
\begin{equation}\label{def:twist}
f * g \, (x) = 
\int  \, \tilde f (k) \; \tilde g (p)
\; \textup{e}^{-\textup{i}(p+ k)x} \; \textup{e}^{-\frac{\textup{i}}{2} p\theta k}\; \textup{d}^4k \, \textup{d}^4 p
\end{equation}
where $\tilde f$ and $\tilde g$ denote the Fourier transforms of $f$ and $g$, respectively, and where $\theta$ is a \emph{nondegenerate} antisymmetric $4\times 4$-matrix.
Observe that in a Euclidean theory, a product such as $kx$ stands for the canonical scalar product on $\R^4$, whereas in a hyperbolic setting, it denotes a Lorentz product, $kx =k_0x_0 - \mathbf{k}\mathbf{x}$ for $x=(x_0,\mathbf{x})$ and $k=(k_0,\mathbf{k})$, with $ \mathbf{k}\mathbf{x}$ denoting the canonical scalar product on $\R^3$. The oscillating factor $\textup{e}^{-\frac{\textup{i}}{2} p\theta k}$ is also called the twisting.

\subsection{The tensor product of 2-point functions}
Since the 2-point function remains unchanged in noncommutative field theory, we have to consider higher order correlation functions in order to see a difference between field theory on Moyal space and ordinary field theory. Again, it would be beyond the scope of this note to explain the whole setup. It will be sufficient to consider as an example a  particular contribution to the so-called 4\nobreakdash-point function of free massive scalar field theory. In ordinary field theory, the distribution of interest here is the  2-fold tensor product of 2\nobreakdash-point functions, 
\begin{equation}\label{def:Delta+2}
\Delta_+^{(2)} (x,y) = 
\frac 1{(2\pi)^6} \int \mshf k \, \mshf p \;
	 \textup{e}^{-\textup{i} (\omsh k x_0 + \omsh p y_0) 
	 		+\textup{i} (\mathbf{k} \mathbf{x} +  \mathbf{p} \mathbf{y})}\; \mshi k \, \mshi p
\end{equation}
The reader who is familiar with quantum field theory (in position space) will of course recognize that this tensor product makes up the 4\nobreakdash-point function (i.e. the vacuum expectation value of four fields), since
\[
\langle \Omega , \phi(x_1) \, \phi(x_2)\, \phi(x_3)\, \phi(x_4) \, \Omega \rangle \ = \
 \sum 
\Delta_+^{(2)} (x_{i_1}-x_{j_1},x_{i_2}-x_{j_2})
\]
where the sum runs over all pairs $(i_1,j_1)$, $(i_2,j_2)$ of indices with $\{i_1,i_2,j_1,j_2\}=\{1,2,3,4\}$ and $i_1<j_1$, $i_2<j_2$. 

By standard arguments from microlocal analysis involving the wavefront set of distributions, it can be shown that even the pullback of this tensor product with respect to the diagonal map, that is, the product in the sense of H\"ormander, is a well-defined distribution $\in \mathcal S^\prime (\R^4)$ (see for instance \cite[Chap~IX.10]{RS}). For the kernel given by (\ref{def:Delta+2}), this would amount to setting $x=y$. In order to avoid issues regarding renormalization later, in this note, however, only tensor products of distributions will be considered.

It is well-known and not difficult to see that $\Delta_+^{(2)}$ is again the boundary value of an analytic function:

\begin{lem} \label{lem:AnaCont2}  The tempered distribution $\Delta_+^{(2)}$ is the boundary value of a function $f_2$ which is analytic in  $\R^{4}\times \R^4-\textup{i}\Gamma_+\times \Gamma_+$. Explicitly, for
\[
z=(x_0,\mathbf{x},y_0,\mathbf{y}) \in \R^4\times \R^4
\qquad\text{and}\qquad
\eta=(x_4,\mathbf{0},y_4,\mathbf{0}) \in \Gamma_+\times \Gamma_+
\]
\textup{(}hence $x_4$ and $y_4>0$\textup{)}, we have 
\[ 
f_2(z-\textup{i}\eta) = \frac 1{(2\pi)^6} \int \mshf k \, \mshf p  
 \; \textup{e}^{- \omsh k (x_4+\textup{i}x_0) - \omsh p (y_4+\textup{i}y_0) +\textup{i} \mathbf{k} \mathbf{x}+\textup{i}\mathbf{p} \mathbf{y}}
\; \mshi k \, \mshi p \ .
\] 
For 
the function $s_2$ defined for $\eta$ and $z$ as above, by 
\[
s_2(\mathbf{x}, x_4 +\textup{i}x_0,\mathbf{y}, y_4 +\textup{i}y_0):=f_2(z-\textup{i}\eta)\ ,
\] 
we find for $(x,y)=(\mathbf{x}, x_4,\mathbf{y}, y_4) \in \R^4\times \R^4$, $x_4$ and $y_4>0$, the explicit form
\begin{equation}\label{eq:schwinger2}
s_2(x,y) = \frac 1{(2\pi)^8} \int 
\; \frac{1}{k^2+m^2} \;  \frac{1}{p^2+m^2}
 \; \textup{e}^{ +\textup{i} k x  +\textup{i} p y}
 \; \textup{d}^4k \, \textup{d}^4 p
\end{equation}
where $p^2=\mathbf{p}^2+p_4^2$, and likewise, $k^2=\mathbf{k}^2+k_4^2$.
\end{lem}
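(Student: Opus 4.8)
The plan is to reduce everything to the one-variable construction carried out just before \eqref{eq:schwingerT}, using the fact that the kernel in \eqref{def:Delta+2} factorizes. Regarded as an oscillatory integral, the right-hand side of \eqref{def:Delta+2} is nothing but $\Delta_+(x)\,\Delta_+(y)$, so that $\Delta_+^{(2)}=\Delta_+\otimes\Delta_+$ as a tempered distribution on $\R^4\times\R^4=\R^8$, and accordingly $\widetilde{\Delta_+^{(2)}}=\tilde\Delta_+\otimes\tilde\Delta_+$, whose support is the product of the two forward mass shells and hence lies in $\overline{\Gamma_+}\times\overline{\Gamma_+}$. From the discussion preceding \eqref{eq:schwingerT} we already know that $\Delta_+$ is the boundary value of a function $f$ analytic on the tube $\R^4-\textup{i}\Gamma_+$ and given there by \eqref{eq:schwingerT}. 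I would then simply put $f_2(z,w):=f(z)\,f(w)$. Being a product of two functions that are analytic in disjoint groups of four complex variables and jointly continuous, $f_2$ is analytic on $(\R^4-\textup{i}\Gamma_+)\times(\R^4-\textup{i}\Gamma_+)$, and this set is exactly the tube $\R^4\times\R^4-\textup{i}\,\Gamma_+\times\Gamma_+$.

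For the explicit formulas, I would evaluate the two factors of $f_2(z-\textup{i}\eta)$ by \eqref{eq:schwingerT}, once with imaginary part $\eta_1=(x_4,\mathbf 0)$ and once with $\eta_2=(y_4,\mathbf 0)$; the product of the two resulting oscillatory integrals is precisely the asserted expression for $f_2(z-\textup{i}\eta)$, and hence for $s_2$. Setting $x_0=y_0=0$ in it and applying the identity \eqref{identity} once in the variable pair $(\mathbf k,x_4)$ and once in $(\mathbf p,y_4)$ --- both admissible since $x_4,y_4>0$ --- converts the two three-dimensional integrals into the four-dimensional Euclidean ones of \eqref{eq:schwinger2}, once the powers of $2\pi$ are collected, with $kx=\mathbf k\mathbf x+k_4x_4$ and $py=\mathbf p\mathbf y+p_4y_4$.

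The one point needing a real argument --- and the step I expect to be the main, if still routine, obstacle --- is to verify that $\Delta_+^{(2)}$ is the boundary value of $f_2$ in the sense of $\mathcal S^\prime(\R^8)$, that is, that $f_2(\,\cdot\,-\textup{i}t\eta)\to\Delta_+^{(2)}$ as $t\downarrow0$ for every $\eta\in\Gamma_+\times\Gamma_+$ and tested against \emph{all} of $\mathcal S(\R^8)$, not merely against test functions of product form. On a product test function $g_1\otimes g_2$ the pairing factorizes into the two one-variable limits, giving $\tilde\Delta_+(g_1)\,\tilde\Delta_+(g_2)$; hence convergence holds on the subspace of finite sums of such functions, which is dense in $\mathcal S(\R^8)$. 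To pass to arbitrary test functions I would invoke equicontinuity: by Banach--Steinhaus the one-variable families $\{f(\,\cdot\,-\textup{i}t\eta_j)\}_{0<t\le1}$ are bounded in $\mathcal S^\prime(\R^4)$ (for each fixed test function the pairing has a limit as $t\downarrow0$, hence is bounded in $t$), so their tensor products form a bounded, hence equicontinuous, family in $\mathcal S^\prime(\R^8)$ --- which, via the nuclear theorem, is the completed tensor product of $\mathcal S^\prime(\R^4)$ with itself; an equicontinuous net of tempered distributions that converges on a dense subspace converges on the whole space, and the limit is $\tilde\Delta_+\otimes\tilde\Delta_+=\Delta_+^{(2)}$. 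Alternatively one can circumvent this density step by applying Theorem~\ref{thm:AnaCont} directly in $\R^8$ to the cone $\Gamma_{a,\pi/3}$ with $a=\frac1{\sqrt2}(1,\mathbf 0,1,\mathbf 0)$, which contains the support $\overline{\Gamma_+}\times\overline{\Gamma_+}$ of $\widetilde{\Delta_+^{(2)}}$; a short computation shows $\Gamma_{a,\pi/6}\subset\Gamma_+\times\Gamma_+$, so that route only yields analyticity on the proper subtube $\R^4\times\R^4-\textup{i}\,\Gamma_{a,\pi/6}$, which is why the factorization argument is the one delivering the full domain claimed in the lemma.
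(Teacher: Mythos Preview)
Your proposal is correct and follows essentially the same route as the paper, which simply records that the first claim is a direct consequence of Theorem~\ref{thm:AnaCont} applied with respect to $x$ and $y$ separately and that the second claim follows from identity~\eqref{identity}. Your careful handling of the boundary-value limit via equicontinuity and density supplies detail the paper leaves implicit, but the underlying factorization strategy is identical.
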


\begin{proof}The first claim is a direct consequence of Theorem~\ref{thm:AnaCont} applied with respect to $x$ and $y$ separately, and the second claim follows again from the identity (\ref{identity}).
\end{proof}

As in the previous section, one again dropps the restrictions on $x_4$ and $y_4$ and thereby extends $s_2$ to a distribution $S_2$, whose Fourier transform is the smooth function
\[
\tilde S(k)\,\tilde S(p)=\frac{1}{k^2+m^2}  \frac{1}{p^2+m^2}
\]
Again, upon restriction of $S_2$ to $\R^3\times \R_{>0} \times \R^3\times \R_{>0}$, it is equal to the function $s_2$. As an aside, it is mentioned that when one considers the pullback of $\Delta_+^{(2)}$ with respect to the diagonal map (such that, formally, one finds $x=y$ in (\ref{eq:schwinger2})),  then
the kernel $S_2(x,x)$ is the Fourier transform  of the convolution 
\[
\tilde S\times \tilde S\,(k)=\int  \frac{1}{(k-p)^2+m^2}\;  \frac{1}{p^2+m^2}
\; \textup{d}^4p\ . 
\]
Morevoer, let us consider again, how Feynman propagators enter the game. 
As is well-known, $\tilde S_2^\theta$ is the analytic continuation of a product of Feynman propagators. Explicitly, we find \label{S2DeltaF}
that its Fourier transform $\tilde S^\theta_2$ is given in terms of the kernel 
\[
w_2(\mathbf{k},k_4-\textup{i}k_0,\mathbf{p},p_4-\textup{i}p_0)
:=\tilde \Delta_F(k_0+\textup{i}k_4,\mathbf{k}) \;\tilde \Delta_F(p_0+\textup{i}p_4,\mathbf{p})
\]
as follows
\[
\tilde S_2(\mathbf{k},k_4,\mathbf{p},p_4)=-w_2(\mathbf{k},k_4,\mathbf{p},p_4)\ .
\]

It is well-known that the procedure applied to the twofold tensor product in lemma~\ref{lem:AnaCont2} can be applied more generally. Each contribution to the (hyperbolic) $2n$-point function (or Wightman function) is an $n$-fold tensor product of 2-point functions ($n$-point functions for odd $n$ vanish). In order to find the corresponding higher order Schwinger function, one considers the analytic continuation according to Theorem~\ref{thm:AnaCont} in each of the $n$ variables and proceeds in the same manner as explained for the 4-point function above.

\subsection{The twisted product of 2-point functions}
In~\cite{birkh}, it was shown how $2n$-point functions are calculated in hyperbolic massive scalar field theory on the noncommutative Moyal space ($n$-point functions for $n$ odd still vanish). As it turns out, the first deviation from ordinary field theory shows up in  the 4-point function, where one of the contributions is a {\em twisted} tensor product of two 2-point functions,
\begin{equation}\label{DeltaPlusNC}
\Delta_+^{(\star 2)}  (x,y) :=  \int \mshf k \, \mshf p 
 \; \textup{e}^{-\textup{i} (\omsh k x_0 + \omsh p y_0) +\textup{i} (\mathbf{k} \mathbf x +  \mathbf{p} \mathbf{x})}
 \; \textup{e}^{-\textup{i} \tilde p\theta \tilde k}
\; \mshi k \, \mshi p
\end{equation}
where $\tilde k=(\omsh k, \mathbf{k})$, and $\tilde p=(\omsh p, \mathbf{p})$.
In the terminology of physics, this means that the momenta $k$ and $p$ in the oscillating factor are \emph{on-shell}. This will turn out to be very important later on. It is also important to note that, while our starting point is the twisted product (\ref{def:twist}) on $\R^4$, the vectors in the twisting are on-shell 
as a consequence of the support properties of 
\[
\tilde \Delta_+(k_0,\mathbf{k})=\frac1{\omsh k}\; \delta(k_0-\omsh k) \ .
\] 
Observe also that  compared to the ordinary twisting in (\ref{def:twist}), the factor $2$ in the oscillating factor appears, since in the calculations, two oscillating factors as in (\ref{def:twist}) either cancel or (in the above case) add up, see~\cite{birkh}. 

Once more, we now apply Theorem~\ref{thm:AnaCont}.

\begin{lem} \label{lem:AnaContNC} The tempered distribution $\Delta_+^{(\star 2)}$ is the boundary value of a function $f^\theta_2$ which is analytic in $\R^{4}\times \R^4-\textup{i}\Gamma_+\times \Gamma_+$. Explicitly, for $z=(x_0,\mathbf{x},y_0,\mathbf{y}) \in \R^4\times \R^4$ and $\eta=(x_4,\mathbf{0},y_4,\mathbf{0}) \in \Gamma_+\times \Gamma_+$ \textup{(}hence $x_4$ and $y_4>0$\textup{)}, we have  
\[ 
f^\theta_2(z-\textup{i}\eta) = \frac 1{(2\pi)^6} \int \mshf k \, \mshf p  
 \; \textup{e}^{- \omsh k (x_4+\textup{i}x_0) - \omsh p (y_4+\textup{i}y_0) +\textup{i} \mathbf{k}\mathbf{x} +\textup{i}\mathbf{p} \mathbf{y}}
 \; \textup{e}^{-\textup{i} \tilde p\theta \tilde k} \; \mshi k \, \mshi p
\]
where $\tilde k=(\omsh k, \mathbf{k})$, $\tilde p=(\omsh p, \mathbf{p})$.
For 
the function $s^\theta_2$ defined for $\eta$ and~$z$ as above, by
\[
s^\theta_2(\mathbf{x}, x_4 +\textup{i}x_0,\mathbf{y}, y_4 +\textup{i}y_0)
:= f^\theta_2(z-\textup{i}\eta)
\]
we then find 
for $(x,y)=(\mathbf{x}, x_4,\mathbf{y}, y_4) \in \R^4\times \R^4$ with $x_4, y_4>0$,
\begin{equation}\label{eq:schwingerTheta2}
s^\theta_2(x,y) = \frac 1{(2\pi)^8} \int 
\; \frac{1}{k^2+m^2} \; \frac{1}{p^2+m^2}
 \; \textup{e}^{ +\textup{i} k x} \; \textup{e}^{ +\textup{i} p y}
 \; \textup{e}^{-\textup{i} \tilde p \theta  \tilde k}
\; \textup{d}^4k \, \textup{d}^4 p\end{equation}
where $p^2=\mathbf{p}^2+p_4^2$ and $p^2=\mathbf{p}^2+p_4^2$, and with $\tilde k=(\omsh k, \mathbf{k})$, $\tilde p=(\omsh p, \mathbf{p})$ as above.
\end{lem}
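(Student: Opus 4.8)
\noindent
The plan is to run the argument of Lemma~\ref{lem:AnaCont2} essentially verbatim, the only new feature being the twisting factor $\textup{e}^{-\textup{i}\tilde p\theta\tilde k}$; the whole point is that this factor does not interfere with any of the steps there.

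First I would pass to momentum space. Up to the conventional normalization factor, the Fourier transform of $\Delta_+^{(\star 2)}$ on $\R^4\times\R^4$ is
\[
\widetilde{\Delta_+^{(\star 2)}}(k_0,\mathbf k,p_0,\mathbf p)
=\mshf k\,\mshf p\;\delta(k_0-\omsh k)\,\delta(p_0-\omsh p)\;\textup{e}^{-\textup{i}\tilde p\theta\tilde k},
\]
that is, the Fourier transform of $\Delta_+^{(2)}$ multiplied by the function $(\mathbf k,\mathbf p)\mapsto\textup{e}^{-\textup{i}\tilde p\theta\tilde k}$. Since $m>0$, the quantity $\tilde p\theta\tilde k$ is a smooth function all of whose derivatives grow at most polynomially --- the only nonpolynomial ingredients, $\omsh k$ and $\omsh p$, have bounded derivatives of every order --- so $(\mathbf k,\mathbf p)\mapsto\textup{e}^{-\textup{i}\tilde p\theta\tilde k}$ is a multiplier for $\mathcal S^\prime$. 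Hence $\widetilde{\Delta_+^{(\star 2)}}$ is a well-defined tempered distribution, and its support is contained in that of $\widetilde{\Delta_+^{(2)}}$, namely in the product of the two positive mass shells, which lies in the closure of $\Gamma_+\times\Gamma_+$. Applying Theorem~\ref{thm:AnaCont} with respect to the two $\R^4$-blocks separately, exactly as in the proof of Lemma~\ref{lem:AnaCont2}, then shows that $\Delta_+^{(\star 2)}$ is the boundary value of a function $f^\theta_2$ analytic in $\R^4\times\R^4-\textup{i}\Gamma_+\times\Gamma_+$.

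Next I would exhibit $f^\theta_2$ explicitly by direct inspection. Substituting $x_0\mapsto x_0-\textup{i}x_4$ and $y_0\mapsto y_0-\textup{i}y_4$ in~(\ref{DeltaPlusNC}) replaces the oscillating factor $\textup{e}^{-\textup{i}(\omsh k x_0+\omsh p y_0)}$ by $\textup{e}^{-\textup{i}(\omsh k x_0+\omsh p y_0)}\,\textup{e}^{-\omsh k x_4-\omsh p y_4}$; because $|\textup{e}^{-\textup{i}\tilde p\theta\tilde k}|=1$ and $\int\mshf k\,\textup{e}^{-\omsh k x_4}\,\mshi k<\infty$ for $x_4>0$ (and likewise in $\mathbf p$), the resulting integral converges absolutely whenever $x_4,y_4>0$, and differentiation under the integral sign shows that it is analytic in $x_4+\textup{i}x_0$ and in $y_4+\textup{i}y_0$ on $\{x_4>0\}\times\{y_4>0\}$. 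Matching boundary values with the previous paragraph identifies this integral with $f^\theta_2(z-\textup{i}\eta)$, which is the first displayed formula of the lemma; the function $s^\theta_2$ is then just a relabelling of its arguments.

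Finally, to obtain~(\ref{eq:schwingerTheta2}) I would put $x_0=y_0=0$ in that formula and apply the identity~(\ref{identity}) to each of the two factors $\mshf k\,\textup{e}^{-\omsh k x_4}$ and $\mshf p\,\textup{e}^{-\omsh p y_4}$ --- legitimate since $x_4,y_4>0$ --- thereby introducing the new integration variables $k_4$ and $p_4$; after interchanging the order of integration, which is justified by absolute convergence, one is left with the denominators $1/(k^2+m^2)$, $1/(p^2+m^2)$ for $k=(\mathbf k,k_4)$, $p=(\mathbf p,p_4)$, the exponentials $\textup{e}^{\textup{i}kx}$, $\textup{e}^{\textup{i}py}$, and an unchanged twisting. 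The one point that needs to be watched --- and it is precisely the content of the lemma --- is that $k_4$ and $p_4$ enter solely through $\textup{e}^{\textup{i}kx}$ and $\textup{e}^{\textup{i}py}$: the twisting keeps the on-shell form $\textup{e}^{-\textup{i}\tilde p\theta\tilde k}$ with $\tilde k=(\omsh k,\mathbf k)$ and $\tilde p=(\omsh p,\mathbf p)$ and does \emph{not} turn into the four-dimensional Euclidean twisting $\textup{e}^{-\textup{i}p\theta k}$ one would write down in the traditional approach. Accordingly there is no genuine analytic obstacle; the work is entirely in this bookkeeping and in the routine Fubini and differentiation-under-the-integral justifications.
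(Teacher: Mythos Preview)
Your proposal is correct and follows essentially the same route as the paper's own proof: the first claim is reduced to Theorem~\ref{thm:AnaCont} via the support of the Fourier transform in the closure of $\Gamma_+\times\Gamma_+$, and the second claim is obtained from identity~(\ref{identity}), with the explicit observation that the on-shell twisting factor is untouched by that step. You simply supply more of the routine analytic justifications (temperedness via the multiplier property, absolute convergence, Fubini) than the paper spells out.
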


\begin{proof}Since the Fourier transform of $\Delta_+^{(\star 2)}$ is still a tempered distribution with support  contained in the closure of $\Gamma_+\times \Gamma_+$,
the first claim follows from Theorem~\ref{thm:AnaCont}. The second claim again follows from the identity (\ref{identity}) -- which, as should be noted, does not affect the twisting factor.
\end{proof}

Observe that $s^\theta_2$ and $s_2$ from Lemma~\ref{lem:AnaCont2} differ only by the oscillating factor 
$ \textup{e}^{-\textup{i} \tilde p\theta \tilde k}$. 
As before, we now extend $s^\theta_2$ to a distribution $S^\theta_2$ by dropping the restriction on $x_4$ and $y_4$, such that $S^\theta_2$ is given by the Fourier transform of the smooth function 
\begin{equation}\label{sTildeTheta2}
\tilde S_2^{\theta}(k,p)=\frac{1}{k^2+m^2} \; \frac{1}{p^2+m^2}
 \; \textup{e}^{-\textup{i} \tilde p\theta \tilde k}.
\end{equation}
Again, in the case of coinciding points, instead of $\tilde S_2^{\theta}(k,p)$ one considers the Fourier transform of the (now twisted) convolution 
\[
\int  \frac{1}{(k-p)^2+m^2} \; \frac{1}{p^2+m^2}
 \; \textup{e}^{-\textup{i} \tilde p\;\theta \; \widetilde{k-p}} 
 \; \textup{d}^4p
\]
where $\widetilde{k-p}=(\omsh{k-p},k-p)$.

\medskip
It is very important to note that the momenta which appear in the oscillating factors in all the expressions above are on-shell, i.e. that they are of the form   $\tilde p=(\omsh p, \mathbf{p})$, likewise for $k$ or $p-k$. 
The oscillating factor therefore distinguishes the components of $(\mathbf{p}, p_4)$ and is, in particular, independent of the fourth component~$p_4$. The reason for this lies in the fact that the Fourier transform of the 2-point function forces the momenta in the oscillating factor to be on-shell, and this is not changed by the analytic continuation.

\medskip These considerations turn out to be crucial in the following assertion:

\begin{rem} \label{sFeynman} Since the oscillating factor in (\ref{sTildeTheta2}) is independent of one of the components of $k$ and $p \in \R^4$, it is obvious that $\tilde S_2^\theta$ is the analytic continuation a product of Feynman propagators with an on-shell twisting. Explicitly, we find 
that the Schwinger function's $S^\theta_2$ Fourier transform $\tilde S^\theta_2$ is given in terms of the kernel 
\[
w^\theta_2(\mathbf{k},k_4-\textup{i}k_0,\mathbf{p},p_4-\textup{i}p_0)
:=\tilde \Delta_F(k_0+\textup{i}k_4,\mathbf{k}) \;\tilde \Delta_F(p_0+\textup{i}p_4,\mathbf{p})
\; 	\textup{e}^{-\textup{i} \tilde p \theta  \tilde k}
\]
as follows
\[
\tilde S^\theta_2(\mathbf{k},k_4,\mathbf{p},p_4)=-w^\theta_2 (\mathbf{k},k_4,\mathbf{p},p_4)\ .
\]
\end{rem}

\begin{rem} All this remains true when one calculates the higher order Schwinger functions from the $2n$-point functions. These latter distributions are of a similar form as (\ref{DeltaPlusNC}), i.e. they are twisted tensor products of 2-point functions where a certain combinatorics determines which combinations of momenta appear in the twistings, see~\cite{birkh}. The important point is that again, all momenta in the twistings are on-shell. Therefore, the same construction that was employed for the 4-point function above, i.e. an analytic continuation in the $n$ variables separately, can be applied and again leads to Schwinger functions with twistings that remain {\em on-shell}. Finally, the analytic continuation of the corresponding Fourier transform can be performed as in Remark~\ref{sFeynman} and leads to (twisted products of) Feynman propagators with twistings still only involving mass-shell momenta. 
\end{rem}

\medskip 
The fact that one starts from hyperbolic two-point functions which in turn force the momenta in the twistings to be on-shell is the essential difference to the traditional noncommutative Euclidean framework employed in the literature. In this latter framework,  (Euclidean) Schwinger functions are the starting point, and of course, when twisted products appear, by (\ref{def:twist}) the oscillating factors depend on all four components of a momentum vector $k=(\mathbf k, k_4)$.  For instance, instead of finding $\tilde s_2^{\theta}$ as in (\ref{sTildeTheta2}), one starts from the following  expression  
\begin{equation}\label{eTildeTheta2}
\tilde e_2^{\theta}(k,p)=\frac{1}{k^2+m^2} \; \frac{1}{p^2+m^2}
 \; \textup{e}^{-\textup{i} p\theta  k} 
\end{equation}
where $k=(\mathbf{k}, k_4)$ and $p=(\mathbf{p}, p_4)$. So far, it was not possible to relate this framework to a hyperbolic one, the main difficulty being the dependence of the oscillating factor on $k_4$. Naively copying the procedure sketched on page~\pageref{SDeltaF}, on page~\pageref{S2DeltaF} and in Remark~\ref{sFeynman} to pass to Feynman propagators (via the kernels $w$, $w_2$ and $w_2^\theta$, repsectively) 
leads to exponentially increasing terms which render the integrals ill-defined. So far, the only way out found seems to be to make the oscillating factor independent of one of the components in an \emph{ad hoc} way, by requiring~$\theta$ to be a matrix of rank~2 (``spacelike noncommutativity'').

\medskip

Remark~\ref{sFeynman} shows that such  measures are unnecessary  when the new noncommutative Euclidean framework derived from the hyperbolic $n$-point functions is employed.

\section{Outlook}

Based on the above considerations, the most important question now is whether it is possible to set up a consistent Euclidean noncommutative framework with on-shell momenta in the twisting. An obstruction might be that, 
as can be easily seen already in the example $S_2^\theta$  discussed above, the higher order Schwinger functions are not symmetric with respect to reflections in the origin. Also, the new Euclidean on-shell product is not associative. This may jeopardize the possibility to set up a complete consistent perturbative framework using a Schwinger functional.

Still, it is to be hoped that the results presented here open many interesting possibilities for future research. 
For one thing, one should try to generalize the Osterwader Schrader Theorem in this setting. Also,  it would be most interesting to study whether the ultraviolet-infrared mixing problem appears in this setting at all. Certainly,there is reason to hope so, since the most prominent graph (the nonplanar tadpole) that exibits this problem in the traditional Euclidean noncommutative approach, does {\em not} do so, when one simply replaces its twisting by an on-shell twisting.

Last but not least, a thorough understanding of the new Euclidean setup (if feasable) should enable us to learn more about hyperbolic noncommutative models -- which in themselves have proved to be quite difficult to treat.  If a consistent Euclidean perturbative setup can be developed from  the ideas presented here,  general proofs of renormalizability of hyperbolic noncommutative field theory should at last be possible.

\backmatter
\bibliographystyle{smfplain.bst}
\bibliography{bahns_eu}

\providecommand{\bysame}{\leavevmode ---\ }
\providecommand{\og}{``}
\providecommand{\fg}{''}
\providecommand{\smfandname}{et}
\providecommand{\smfedsname}{\'eds.}
\providecommand{\smfedname}{\'ed.}
\providecommand{\smfmastersthesisname}{M\'emoire}
\providecommand{\smfphdthesisname}{Th\`ese}
\begin{thebibliography}{1}

\bibitem{Bwien}
{\scshape D.~Bahns} -- {\og Talks: European {S}cience {F}oundation {W}orkshop
  {N}oncommutative {Q}uantum {F}ield {T}heory\fg}, Erwin Schr\"odinger
  Institute, Vienna, Nov 26-29, 2007, 4th Vienna Central European Seminar on
  Particle Physics and Quantum Field Theory, Nov 30 - Dec 02, 2007, and
  International Conference ''Quantum Spacetime and Noncommutative Geometry'',
  Rome, Sep 29 - Oct 4, 2008.

\bibitem{birkh}
\bysame , {\og Local counterterms on the noncommutative {M}inkowski space\fg},
  Rigorous quantum field theory, Progr. Math., vol. 251, Birkh\"auser, Basel,
  2007, p.~11--26.

\bibitem{bahnsDiss}
\bysame , {\og Perturbative methods on the noncommutative minkowski space\fg},
  \smfphdthesisname, Hamburg, [DESY-THESIS-2004-004] 2003.

\bibitem{dfr}
{\scshape S.~Doplicher, K.~Fredenhagen {\normalfont \smfandname} J.~E. Roberts}
  -- {\og The quantum structure of spacetime at the {P}lanck scale and quantum
  fields\fg}, \emph{Comm. Math. Phys.} \textbf{172} (1995), no.~1, p.~187--220.

\bibitem{GrW}
{\scshape H.~Grosse {\normalfont \smfandname} R.~Wulkenhaar} -- {\og
  Renormalisation of {$\phi\sp 4$}-theory on noncommutative {$\Bbb R\sp 4$} in
  the matrix base\fg}, \emph{Comm. Math. Phys.} \textbf{256} (2005), no.~2,
  p.~305--374.

\bibitem{R}
{\scshape R.~Gurau, V.~Rivasseau {\normalfont \smfandname} A.~Tanasa} -- {\og A
  translation-invariant renormalizable non-commutative scalar model\fg},
  arXiv:0802.0791 [math-ph].

\bibitem{uvir}
{\scshape S.~Minwalla, M.~Van~Raamsdonk {\normalfont \smfandname} N.~Seiberg}
  -- {\og Noncommutative perturbative dynamics\fg}, \emph{J. High Energy Phys.}
  (2000), no.~2, p.~20--31.

\bibitem{RS}
{\scshape M.~Reed {\normalfont \smfandname} B.~Simon} -- \emph{Methods of
  modern mathematical physics. {II}. {F}ourier analysis, self-adjointness},
  Academic Press [Harcourt Brace Jovanovich Publishers], New York, 1975.

\bibitem{sym}
{\scshape K.~Symanzik} -- {\og Euclidean quantum field theory. {I}. {E}quations
  for a scalar model\fg}, \emph{J. Mathematical Phys.} \textbf{7} (1966),
  p.~510--525.

\end{thebibliography}

\end{document}